\title{Improved approximation ratio for covering pliable set families}
\titlerunning{Improved approximation ratio for covering pliable set families}
\author{Zeev Nutov}{The Open University of Israel}{nutov@openu.ac.il}
{https://orcid.org/0000-0002-6629-3243}{}
\authorrunning{Zeev Nutov}
\begin{document}

\maketitle

\newcommand{\sem}    {\setminus}
\newcommand{\subs}   {\subseteq}
\newcommand{\empt}   {\emptyset}

\newcommand{\f}   {\frac}


\def\al   {\alpha}
\def\be {\beta}
\def\ga {\gamma}
\def\de   {\delta}
\def\eps {\epsilon}

\def\CC  {{\cal C}}
\def\FF   {{\cal F}}
\def\LL   {{\cal L}}
\def\TT   {{\cal T}}

\def\sfec    {{\sc Set Family Edge Cover}}
\def\nmcc  {{\sc Near Min-Cuts Cover}}
\def\ckecs  {{\sc Cap-$k$-ECSS}}                 
\def\fgc       {{\sc FGC}}                                 

\keywords{primal dual algorithm, pliable set family, approximation algorithm}

\begin{abstract}
A classic result of Williamson, Goemans, Mihail, and Vazirani [STOC 1993: 708--717]  
states that the problem of covering an uncrossable set family by a min-cost edge set 
admits approximation ratio~$2$, by a primal-dual algorithm with a reverse delete phase.
Recently, Bansal, Cheriyan, Grout, and Ibrahimpur [ICALP 2023: 15:1–15:19] showed that 
this algorithm achieves approximation ratio $16$ for a larger class of set families,
that have much weaker uncrossing properties. 
In this paper we will refine their analysis and show an approximation ratio of $10$.  
This also improves approximation ratios for several variants of the 
{\sc Capacitated $k$-Edge Connected Spanning Subgraph} problem.
\end{abstract}

\section{Introduction} \label{s:intro}

Let $G=(V,E)$ be  graph. 
For $J \subs E$ and $S \subs V$ let $\delta_J(S)$ denote the set 
of edges in $J$ with one end in $S$ and the other in $V \sem S$, and let $d_J(S)=|\delta_J(S)|$ be their number.
An edge set $J$ {\bf covers} $S$ if $d_J(S) \geq 1$.
Consider the following problem:

\begin{center}
\fbox{\begin{minipage}{0.98\textwidth} \noindent
\underline{\sc Set Family Edge Cover} \\ 
{\em Input:} \ \ A graph $G=(V,E)$ with edge costs $\{c_e:e \in E\}$ and a set family $\FF$ on $V$. \\ 
{\em Output:} A min-cost edge set $J \subs E$ such that $d_J(S) \geq 1$ for all $S \in \FF$.
\end{minipage}} \end{center}

In this problem the set family $\FF$ may not be given explicitly, but we will require that some queries related to $\FF$ can be 
answered in time polynomial in $n=|V|$. 
Specifically, following previous work, we will require that for any edge set $I$, 
the inclusion minimal members of the {\bf residual family  $\FF^I=\{S \in \FF:d_I(S)=0\}$} of $\FF$ 
(the family of sets in $\FF$ that are uncovered by $I$) can be computed in time polynomial in $n=|V|$.
We will also assume that $V,\emptyset \notin \FF$, as otherwise the problem has no feasible solution.


Agrawal, Klein and Ravi \cite{AKR} designed and analyzed a primal-dual approximation algorithm
for the {\sc Steiner Forest} problem, and showed that it achieves approximation ratio $2$. 
A classic result of Goemans and Williamson \cite{GW} from the early 90's shows by an elegant proof 
that the same algorithm applies for {\sc Set Family Edge Cover} with an arbitrary proper family $\FF$.
In fact, one of the main achievements of the Goemans and Williamson paper was defining 
a {\em generic class} of set families that models a rich collection of combinatorial optimization problems,
for which the primal dual algorithm achieves approximation ratio~$2$. 
Slightly later, Williamson, Goemans, Mihail, and Vazirani \cite{WGMV} 
further extended this result to the more general class of uncrossable families,
by adding to the algorithm a novel reverse-delete phase;
a set family $\FF$ is {\bf uncrossable} if 
$A \cap B,A \cup B \in \FF$ or $A \sem B,B \sem A \in \FF$ whenever $A,B \in \FF$.
They posed an open question of extending this algorithm to 
a larger class of set families and combinatorial optimization problems.
However, for 30 years, the class of uncrossable set families remained the most general generic class 
of set families for which the WGMV algorithm achieves a constant approximation ratio.

Recently, Bansal, Cheriyan, Grout, and Ibrahimpur \cite{BCGI} analyzed the performance of the WGMV algorithm 
for a class of set families that arise in several variants of capacitated network design problems. 
Recall that $\FF^I$ denotes the residual family of $\FF$ w.r.t. an edge set~$I$, 
and that two sets $A,B$ {\bf cross} if each of the four sets $A \cap B,V \sem (A \cup B),A \sem B,B \sem A$ is non-empty.

\begin{definition}
A set family $\FF$ is {\bf pliable} if for any $A,B \in \FF$ 
at least two of the sets $A \cap B,A \cup B, A\sem B,B \sem A$ belong to $\FF$.  
We say that $\FF$ is {\bf $\ga$-pliable} if it has the following additional property: \\
\noindent 
{\bf Property $(\gamma)$:} 
For any edge set $I$ and sets $S_1 \subset S_2$ in $\FF^I$,  
if an inclusion minimal set $C$ of $\FF^I$ crosses each of $S_1,S_2$,
then the set $S_2 \sem (S_1 \cup C)$ is either empty or belongs to $\FF^I$.
\end{definition}

Bansal, Cheriyan, Grout, and Ibrahimpur \cite{BCGI} showed that the WGMV algorithm 
achieves approximation ratio $16$ for $\gamma$-pliable families, and that Property $(\gamma)$ is essential -- 
without it the cost of the solution found by the WGMV algorithm can be
$\Omega(\sqrt{n})$ times the cost of an optimal solution.
They also considered applications of their result to several variants of capacitated network design problems, as follows. 

\medskip

\begin{center}
\fbox{\begin{minipage}{0.98\textwidth} \noindent
\underline{\nmcc} \\ 
{\em Input:} A graph $G_0=(V,E_0)$, an edge set $E$ on $V$ with costs $\{c_e:e \in E\}$, and an integer $k$. \\ 
{\em Output:} A min-cost edge set $J \subs E$ that covers the set family 
$\{\empt \neq S \subsetneq V: d_{E_0}(S) < k\}$.
\end{minipage}} \end{center}

\medskip

It is known that the set family in {\nmcc} is pliable, and \cite{BCGI} 
showed that it satisfies Property $(\gamma)$, thus obtaining a $16$-approximation. 

\medskip

\begin{center}
\fbox{\begin{minipage}{0.98\textwidth} \noindent
\underline{{\sc Capacitated $k$-Edge Connected Spanning Subgraph} ({\ckecs})} \\ 
{\em Input:} \ \ A graph $G=(V,E)$ with edge costs $\{c_e:e \in E\}$ and edge capacities $\{u_e:e \in E\}$, 
and an integer $k$. \\ 
{\em Output:} A mini-cost edge set $J \subs E$ such that $u(\de_J(S)) \geq k$ for all $\empt \neq S \subset V$.
\end{minipage}} \end{center}

\medskip

One can see that {\nmcc} is a particular case of {\ckecs}, when all edges in $E_0$ have cost $0$ and capacities $1$,
and other edges have capacity $k$. On the other hand, approximation ratio $\al$ for  {\nmcc} implies 
approximation ratio $\al \cdot \lceil k/u_{\min} \rceil$ for {\ckecs}, where $u_{\min}$ is the minimum capacity of an edge; see \cite{BCGI}.
This gives approximation ratio $16 \cdot \lceil k/u_{\min} \rceil$ for {\ckecs}.

Adjiashvili, Hommelsheim and M\"{u}hlenthaler \cite{AHM} 
introduced the following related problem, called $(k,q)$-{\sc Flexible Graph Connectivity}.
Suppose that there is a subset $U \subs E$ of ``unsafe'' edges, and we want to find 
the cheapest spanning subgraph $H$ that will be $k$-connected even if up to $q$ unsafe edge are removed. 
Let us say that a subgraph $H=(V,J)$ of $G$ is {\bf $(k,q)$-flex-connected} if
any cut $\de_H(S)$ of $H$ has at least $k$ safe edges or at least $k+q$ (safe and unsafe) edges. 
Namely, we require that $d_{H \sem U}(S) \geq k$ or $d_H(S) \geq k+q$ for all $\empt \neq S \subset V$.
Summarizing, we get the following problem.

\medskip

\begin{center}
\fbox{\begin{minipage}{0.96\textwidth} \noindent
\underline{{\sc $(k,q)$-Flexible Graph Connectivity} ($(k,q)$-{\fgc})} \\ 
{\em Input:} \ \ A graph $G=(V,E)$ with edge costs $\{c_e:e \in E\}$, $U \subs E$, and integers $k,q \geq 0$. \\  
{\em Output:} A min-cost subgraph $H$ of $G$ such that $H$ is $(k,q)$-flex-connected.
\end{minipage}} \end{center}

For various approximation algorithms for this problem see recent papers \cite{BCHI,BCGI,N-arx,CJ,B}.
Specifically, Bansal \cite{B} showed that if the problem of covering a $\ga$-pliable familily achieves approximation ratio $\al$, 
then $(k,3)$-{\fgc} admits approximation ratio $6+\al$, thus obtaining a $22$-approximation for $(k,3)$-{\fgc}.

\newpage

Another generalization of uncrossable families is considered in \cite{N-prd}.
A set family $\FF$ is {\bf semi-uncrossable} if for any $A,B \in \FF$ we have that
$A \cap B \in \FF$ and one of $A \cup B,A \sem B,B \sem A$ is in $\FF$, or $A \sem B,B \sem A \in \FF$. 
One can verify that semi-uncrossable families are sandwiched between uncrossable and $\ga$-pliable families. 
In \cite{N-prd} it is shown that the WGMV algorithm achieves the same approximation ratio $2$ for semi-uncrossable families
(the proof is essentially identical to that of \cite{WGMV} for uncrossable families), 
and are also given several examples of problems that can be modeled by semi-uncrossable families that are not uncrossable. 

We note that Bansal, Cheriyan, Grout, and Ibrahimpur \cite{BCGI} 
introduced several novel analysis techniques of the primal dual method, and derived a relevant Property $(\ga)$,
that enables to obtain a constant approximation ratio for pliable families without excluding known applications. 
Here we refine their analysis to improve their approximation ratio $16$ as follows. 

\begin{theorem} \label{t:main}
The {\sc Set Family Edge Cover} problem with a $\gamma$-pliable set family $\FF$ 
admits approximation ratio $10$.
\end{theorem}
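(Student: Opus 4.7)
The plan is to adopt the primal-dual framework with reverse delete of~\cite{WGMV} and tighten the charging scheme of~\cite{BCGI}. By the standard reduction, the approximation ratio of the WGMV algorithm is controlled by the supremum, over all iterations of the dual-growth phase, of the average degree
\[
\rho \ = \ \frac{1}{|\CC|}\sum_{C \in \CC} d_{J \sem I}(C),
\]
where $\CC$ is the family of inclusion-minimal members of the current residual family $\FF^I$ and $J$ is the final edge set returned by the algorithm. LP duality then gives $c(J) \leq \rho \cdot \mathrm{OPT}$, so the task reduces to proving $\sum_{C \in \CC} d_{J \sem I}(C) \leq 10 \cdot |\CC|$ in every iteration.

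I would begin by setting up the witness structure following~\cite{BCGI}. Each edge $e \in J \sem I$ is assigned a minimal witness set $W_e \in \FF^I$ for which $e$ is the unique covering edge of $W_e$ at the moment the reverse-delete step tried to remove it. Since $\ga$-pliable families are not uncrossable, the witnesses need not form a laminar family; however, iterated applications of the pliability axiom produce a laminar ``envelope'' inside which every $C \in \CC$ either nests cleanly or crosses a controlled number of witnesses. Property $(\ga)$ supplies the critical handle on the crossing case: whenever $C \in \CC$ crosses a nested pair $W_{e_1} \subset W_{e_2}$ of witnesses, the set $W_{e_2} \sem (W_{e_1} \cup C)$ is either empty or a strictly smaller member of $\FF^I$, enabling an inductive descent.

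The heart of the improvement from $16$ to $10$ is a refined accounting of two contributions to $\sum_{C \in \CC} d_{J \sem I}(C)$: the \emph{nesting} edges, each of which covers a single set of $\CC$, and the \emph{crossing} edges, each of which covers several sets of $\CC$. The analysis in~\cite{BCGI} bounds these contributions essentially independently and pays a large constant for each. My plan is to merge them into a single charging scheme using the laminar envelope of witnesses and Property $(\ga)$ simultaneously: each $C \in \CC$ dispatches at most $10$ units of charge to its incident edges in $J \sem I$, routed through the forest structure produced by iterating Property $(\ga)$ along the chain of witnesses crossed by~$C$. A careful case analysis then verifies that the $10$-unit budget is never exceeded.

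The main obstacle, as in~\cite{BCGI}, will be the case in which a single $C \in \CC$ crosses many witnesses and also overlaps several other sets of $\CC$. In this regime any charging scheme must distribute the cost across many entities, and a naive bound pays a constant for each. To push the constant below $16$, I would iterate Property $(\ga)$ along the entire crossing chain (rather than on a single pair, as in~\cite{BCGI}), and argue that the induced descent terminates at an inclusion-minimal residual set already present in $\CC$. Showing that the cumulative deficit of this descent is at most $8$, so that the remaining $2$ units suffice to cover the ``uncrossable core'' as in the original WGMV analysis, is the most delicate part of the argument.
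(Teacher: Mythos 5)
Your proposal correctly sets up the framework that the paper uses: the WGMV primal-dual algorithm with reverse delete, the reduction via LP duality to showing $\sum_{C \in \CC} d(C) \leq 10\,|\CC|$ at every iteration for the surviving edges, the use of minimal witness sets for those edges, and the role of Property $(\ga)$ as an ``inductive descent'' tool when a core crosses a nested pair of witnesses. All of this matches the paper. However, the part that actually produces the constant $10$ is missing, and the sketch of it that you give does not correspond to an argument that would work.

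Concretely: the paper (following \cite{BCGI}) first proves that for a \emph{pliable} family one can always choose the witness family to be genuinely \emph{laminar} -- this is a lemma with an uncrossing proof, not merely a laminar ``envelope'' that the cores may cross; your hedging on this point suggests you would not recover the tree structure on which everything else rests. Given the laminar witness tree, the obstruction to the classical factor $2$ is the existence of \emph{hollow chains}: maximal paths of witness sets owning no core. Each witness set is covered by exactly one surviving edge, each edge meets $U=\bigcup_{C\in\CC}C$ in at least one endpoint, and contracting the chains leaves a tree with at most $2|\CC|-1$ edges; since \cite{BCGI} bound the chain length by $\ell\le 3$, a contracted chain edge contributes at most $2(\ell+1)=8$, giving $8(2|\CC|-1)<16|\CC|$. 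The paper's entire improvement is a single new structural lemma: applying Property $(\ga)$ to consecutive sets $S_i\subset S_{i+1}$ of a hollow chain (using that a hollow set minus its unique child contains no member of $\FF$), one shows that among the $2(\ell+1)$ endpoints of the chain's edges at most $3$ lie in $U$, so a contracted chain edge contributes at most $5$ to $\sum_C d(C)$, yielding $5(2|\CC|-1)<10|\CC|$. Your charging scheme (``each $C$ dispatches at most $10$ units \ldots a careful case analysis verifies the budget'') asserts exactly the bound that needs to be proved, and the arithmetic you propose (a ``cumulative deficit of at most $8$'' plus ``$2$ remaining units'') does not match any decomposition of the quantity $\sum_C d(C)$; the $10$ in the paper arises as $5\times 2$ from the per-chain bound times the tree-edge count, not as $8+2$. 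Without the laminar witness lemma, the hollow-chain analysis, and the ``at most $3$ endpoints in $U$'' lemma, the proposal does not establish the theorem.
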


Naturally, this also improves approximation ratios for several applications of $\ga$-pliable families discussed in \cite{BCGI,B} --
these improvements are summarized in Table~\ref{tbl:fgc}. 

\begin{table} [htbp] 
\begin{center}
\begin{tabular}{|l|l|l|}  \hline  
                                  & {\bf previous}                                                                & {\bf this paper}              
\\\hline \hline
{\nmcc}                    & $16$ \cite{BCGI}                                                           & $10$
\\\hline 
{\ckecs}                   & $16 \cdot \lceil k/u_{\min} \rceil$ \cite{BCGI}               &      $10 \cdot \lceil k/u_{\min} \rceil$
\\\hline
$(k,3)$-{\fgc}           & $22$ for $k$ even, $11+\eps$ for $k$ odd  \cite{B}  & $16$ for $k$ even       
\\\hline
\end{tabular}
\end{center}
\caption{Approximation ratios for {\nmcc}, {\ckecs}, and $(k,3)$-{\fgc}.}
\label{tbl:fgc}
\end{table}

\section{Proof of Theorem~\ref{t:main}} \label{s:t}

Here we prove Theorem~\ref{t:main}. We start with stating some simple properties of pliable families. 
One can see that if an edge $e$ covers one of the sets $A \cap B, A \cup B, A\sem B,B \sem A$ 
then it also covers one of $A,B$. This implies the following.

\begin{lemma} \label{l:res}
If $\FF$ is pliable or is $\ga$-pliable, then so is $\FF^I$, for any edge set $I$.
\end{lemma}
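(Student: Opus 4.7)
The proof hinges on the observation stated just before the lemma: if an edge $e$ covers any one of $A\cap B$, $A\cup B$, $A\sem B$, $B\sem A$, then $e$ already covers $A$ or $B$. Equivalently (contrapositively), if neither $A$ nor $B$ is covered by an edge set $I$, then none of the four ``derived'' sets is covered by $I$ either.

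\textbf{Pliability of $\FF^I$.} Take $A,B\in\FF^I$. Since $A,B\in\FF$ and $\FF$ is pliable, at least two of $A\cap B$, $A\cup B$, $A\sem B$, $B\sem A$ lie in $\FF$. Because $d_I(A)=d_I(B)=0$, the key observation above implies $d_I(X)=0$ for every $X\in\{A\cap B,A\cup B,A\sem B,B\sem A\}$. Hence any of these four sets that happens to be in $\FF$ is automatically in $\FF^I$, and we get at least two members of $\FF^I$ among them, as required.

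\textbf{Property $(\ga)$ for $\FF^I$.} The clean way is to note that residual families compose: for any edge set $J$,
\[
(\FF^I)^J \;=\; \{S\in\FF : d_I(S)=0 \text{ and } d_J(S)=0\} \;=\; \FF^{I\cup J}.
\]
Now Property $(\ga)$ for $\FF^I$ applied to $J$ asks: if $S_1\subset S_2$ lie in $(\FF^I)^J$ and an inclusion-minimal set $C$ of $(\FF^I)^J$ crosses each of $S_1,S_2$, then $S_2\sem(S_1\cup C)$ is empty or lies in $(\FF^I)^J$. Substituting the identity above, this is exactly Property $(\ga)$ for $\FF$ applied to the edge set $I\cup J$, which holds by assumption. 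So $\FF^I$ is $\ga$-pliable.

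\textbf{Main obstacle.} There is no real obstacle here; the only subtle point is noticing that one should reduce Property $(\ga)$ for $\FF^I$ to Property $(\ga)$ for $\FF$ with a different edge set (namely $I\cup J$) rather than trying to verify it directly from scratch. Once one writes $(\FF^I)^J=\FF^{I\cup J}$, both claims fall out immediately from the statement of the key observation.
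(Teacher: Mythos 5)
Your proof is correct and follows essentially the same route as the paper, which leaves the lemma as an immediate consequence of the single observation that an edge covering one of $A\cap B$, $A\cup B$, $A\sem B$, $B\sem A$ must cover $A$ or $B$. You simply make explicit the two details the paper omits: the contrapositive argument for pliability of $\FF^I$, and the identity $(\FF^I)^J=\FF^{I\cup J}$ that transfers Property $(\ga)$ (which is quantified over all edge sets) to the residual family.
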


An {\bf $\FF$-core} is an inclusion minimal member of $\FF$; let $\CC_\FF$ denote the family of $\FF$-cores. 

\begin{lemma} \label{l:CS}
Let $\FF$ be a pliable set family and let $A \in \FF$ and $C \in \CC_\FF$. 
Then either $C \subs A$, or $C \cap A=\empt$, or $A \sem C,A \cup C \in \FF$.
Consequently, the members of $\CC_\FF$ are pairwise disjoint.   
\end{lemma}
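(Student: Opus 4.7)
The plan is to apply the pliability property directly to the pair $A,C \in \FF$ and then use the minimality of $C$ to eliminate two of the four candidate sets. Pliability guarantees that at least two of $A \cap C$, $A \cup C$, $A \sem C$, $C \sem A$ belong to $\FF$. The key observation is that both $A \cap C$ and $C \sem A$ are subsets of $C$, so by minimality of $C$ any member of $\FF$ contained in $C$ must equal $C$.

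From this, if $A \cap C \in \FF$, then $A \cap C = C$, i.e., $C \subs A$; and if $C \sem A \in \FF$, then $C \sem A = C$, i.e., $A \cap C = \empt$. So if we are not in either of the first two cases of the lemma (that is, $C \not\subs A$ and $C \cap A \neq \empt$), then neither $A \cap C$ nor $C \sem A$ lies in $\FF$, and pliability forces both $A \cup C$ and $A \sem C$ to belong to $\FF$, giving the third case.

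For the consequence about pairwise disjointness of $\FF$-cores, I would apply the main statement to distinct $C_1, C_2 \in \CC_\FF$ with $A := C_1$ and $C := C_2$. The first alternative $C_2 \subs C_1$ contradicts minimality of $C_1$ (as $C_2 \in \FF$ and $C_2 \neq C_1$); the third alternative gives $C_1 \sem C_2 \in \FF$, and by minimality of $C_1$ this set must equal $C_1$, i.e., $C_1 \cap C_2 = \empt$. Only the second alternative is compatible with distinct cores, so $C_1$ and $C_2$ are disjoint.

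There is no real obstacle: the entire argument is a short case analysis driven purely by the definition of pliability and the notion of an $\FF$-core. The only minor subtlety is that degenerate situations such as $A \cup C = V$ or $A \sem C = \empt$ cannot occur in the third case, as they would violate pliability together with the minimality reasoning above; this is handled implicitly by the same argument.
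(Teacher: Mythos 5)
Your proof is correct and follows essentially the same route as the paper's: use minimality of $C$ to rule out $A\cap C$ and $C\sem A$ being in $\FF$ (outside the first two cases), so pliability forces $A\sem C, A\cup C\in\FF$. You also spell out the disjointness consequence, which the paper leaves implicit, and your handling of it is sound.
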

\begin{proof}
Since $C \in \CC_\FF$,
we can have $C \cap A \in \FF$ only if $C \subs A$, and 
we can have $C \sem A \in \FF$ only if $C \cap A =\empt$.
In any other case, we must have  $A \sem C,A \cup C \in \FF$, since $\FF$ is pliable. 
\end{proof}

We now describe the algorithm. 
Consider the following LP-relaxation {\bf (P)} for {\sc Set Family Edge Cover} and its dual program {\bf (D)}:
\[ \displaystyle
\begin{array} {lllllll} 
&  \hphantom{\bf (P)} & \min         & \ \displaystyle \sum_{e \in E} c_e x_e & 
   \hphantom{\bf (P)} & \max         & \ \displaystyle \sum_{S \in \FF} y_S  \\
&      \mbox{\bf (P)} & \ \mbox{s.t.}  & \displaystyle \sum_{e \in \delta(S)} x_e  \geq 1 \ \ \ \ \ \forall S \in \FF \ \ \ \ \ \ \  &
       \mbox{\bf (D)} & \ \mbox{s.t.}  & \displaystyle \sum_{\delta(S) \ni e} y_S \leq c_e \ \ \ \ \ \forall e \in E \\
&  \hphantom{\bf (P)} &              & \ \ x_e \geq 0 \ \ \ \ \ \ \ \ \ \ \ \forall e \in E &
   \hphantom{\bf (P)} &              & \ \ y_S \geq 0 \ \ \ \ \ \ \ \ \ \ \ \ \forall S \in \FF 
\end{array}
\]

Given a solution $y$ to {\bf (D)}, an edge $e \in E$ is {\bf tight}
if the inequality of $e$ in {\bf (D)} holds with equality.
The algorithm has two phases.

{\bf Phase~1} starts with $I=\emptyset$ an applies a sequence of iterations.
At the beginning of an iteration, we compute the family $\CC=\CC_{\FF^I}$ of $\FF^I$-cores.
Then we raise the dual variables of the $\FF^I$-cores
uniformly (possibly by zero), until some edge $e \in E \sem I$ becomes tight, and add $e$ to $I$.
Phase I terminates when $\CC_{\FF^I}=\empt$, namely when $I$ covers $\FF$.

{\bf Phase~2} applies on $I$ ``reverse delete'', which means the following.
Let $I=\{e_1, \ldots, e_j\}$, where $e_{i+1}$ was added after $e_i$. 
For $i=j$ downto $1$, we delete $e_i$ from $I$ if $I \sem \{e_i\}$ still covers $\FF$.
At the end of the algorithm, $I$ is output.

It is easy to see that the produced dual solution is feasible, hence 
$\sum_{S \in \FF} y_S \leq {\sf opt}$, by the Weak Duality Theorem.
We prove that at the end of the algorithm 
$$
\sum_{e \in I} c(e) \leq 10 \sum_{S \in \FF} y_S \ .
$$
As any edge in $I$ is tight, the last inequality is equivalent to
$$
\sum_{e \in I} \sum_{\delta_l(S) \ni e} y_S \leq 10 \sum_{S \in \FF} y_S \ .
$$
By changing the order of summation we get:
$$
\sum_{S \in \FF} d_I(S) y_S \leq 10 \sum_{S \in \FF} y_S \ .
$$
It is sufficient to prove that at any iteration the increase at the left hand side is at most
the increase in the right hand side. 
Let us fix some iteration, and let $\CC=\CC_{\FF^I}$ be the family of cores among 
the members of $\FF$ not yet covered. The increase in the left hand side is 
$\varepsilon \cdot \sum_{C \in \CC} d_I(C)$, 
where $\varepsilon$ is the amount by which the dual variables were raised in the iteration, 
while the increase in the right hand side is 
$\varepsilon \cdot 10 |\CC|$. Consequently, it is sufficient to prove that
$\sum_{C \in \CC} d_I(C) \leq 10 |\CC|$.
As the edges were deleted in reverse order, the set $I'$ of edges in $I$ that were 
added after the iteration (and ``survived'' the reverse delete phase),
form an inclusion minimal edge-cover of the family $\FF'$ of members 
in $\FF$ that are uncovered at the beginning of the iteration.
Note also that $\bigcup_{C \in \CC} \delta_{I}(C) \subseteq I'$.
Hence to prove approximation ratio $10$, it is sufficient to prove the following purely combinatorial statement, 
in which due to Lemma~\ref{l:res} we can revise our notation to $\FF \gets \FF'$ and $I \gets I'$.

\begin{lemma} \label{l:2C}
Let $I$ be an inclusion minimal cover of a $\gamma$-pliable set family $\FF$ such that 
every edge in $I$ covers some $C \in \CC$. Then:
\begin{equation} \label{e:2C}
\sum_{C \in \CC} d_I(C) \leq 5 \cdot (2|\CC|-1) \ .
\end{equation}
Furthermore,  if $\FF$ is symmetric ($V \sem S \in \FF$ whenever $S \in \FF$) 
then $\sum_{C \in \CC} d_I(C) \leq 10 \cdot (|\CC|-1)$.
\end{lemma}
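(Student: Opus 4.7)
The plan is to adapt the classical witness/laminar argument from the primal-dual analysis of \cite{WGMV} to the $\ga$-pliable setting, following and refining \cite{BCGI}. Since $I$ is an inclusion-minimal cover of $\FF$, for every $e \in I$ there is a witness set $W_e \in \FF$ with $\de_I(W_e) = \{e\}$; I would choose $W_e$ to be an inclusion-minimal member of the residual family $\FF^{I \sem \{e\}}$. By Lemma~\ref{l:res}, each $\FF^{I \sem \{e\}}$ is itself $\ga$-pliable, and Lemma~\ref{l:CS} then controls the interaction of $W_e$ with the cores of that residual family.

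Next, I would uncross the collection $\{W_e : e \in I\}$ into a laminar family $\LL$ on $V$ whose minimal members are exactly the cores in $\CC$. For a crossing pair $W_e, W_f$, pliability supplies at least two of $W_e \cap W_f, W_e \cup W_f, W_e \sem W_f, W_f \sem W_e$ in $\FF$, and in the benign configurations one replaces the pair with a laminar pair while preserving the witness role of both $e$ and $f$. The dangerous configuration, where a core $C$ crosses two nested witnesses $S_1 \subset S_2$, is precisely the one addressed by Property~$(\ga)$, which supplies the surrogate $S_2 \sem (S_1 \cup C) \in \FF^I$ that restores laminarity. Since $\LL$ then has $|\CC|$ leaves, a standard count bounds $|\LL|$ by $2|\CC|-1$; in the symmetric case one may root $\LL$ at a single top set (using $V \sem S \in \FF$ whenever $S \in \FF$) and bring the effective non-leaf count down to $|\CC|-1$.

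The final step is the charging argument: each pair $(C,e)$ with $C \in \CC$ and $e \in \de_I(C)$ is charged to a node of $\LL$, naturally the minimal ancestor of $C$ in $\LL$ that separates the two endpoints of $e$. The heart of the refinement over \cite{BCGI} is to show that every node of $\LL$ receives at most $5$ such charges, rather than the~$8$ implicit in their argument. One exploits that an edge with both endpoints in cores contributes $2$ to $\sum_C d_I(C)$ and so cannot simultaneously realize a worst case at two different laminar nodes, together with pliability-based rules that rule out configurations producing more than $5$ charges at a single node.

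The main obstacle is this tighter per-node bound of~$5$. It requires a careful simultaneous accounting of both endpoints of each edge, a case split over where each endpoint lies relative to the laminar chain above $C$, and a consistent use of Property~$(\ga)$ during uncrossing, since without $(\ga)$ the approximation ratio is known to degrade to $\Omega(\sqrt{n})$. The symmetric-case improvement to $10(|\CC|-1)$ should then fall out of the same charging scheme applied to the rooted laminar family.
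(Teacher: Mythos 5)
Your opening move (witness sets, uncrossing into a laminar witness family $\LL$) matches the paper, but from there the proposal has a genuine gap: the bound $|\LL|\leq 2|\CC|-1$ that you invoke is false in this setting, and the role you assign to Property $(\ga)$ is misplaced. The laminarization step needs only pliability (Property $(\ga)$ plays no part in it), and the minimal members of $\LL$ need not be cores. More importantly, the tree of $\LL$ can contain chains of nested witness sets each of which owns no core (``hollow'' sets, i.e.\ internal tree nodes with a single child); because of these, $|\LL|=|I|$ is not controlled by $|\CC|$ at all, and this is precisely where covering $\ga$-pliable families differs from covering uncrossable ones. The heart of the argument is to apply Property $(\ga)$ along such a chain $S_0\subset S_1\subset\cdots\subset S_\ell$, together with the hypothesis that every edge of $I$ covers some core, to show (i) $\ell\leq 3$, and (ii) among the endpoints of the $\ell+1$ chain edges, few enough lie in $U=\bigcup_{C\in\CC}C$ that the chain contributes at most $5$ to $\sum_{C\in\CC}d_I(C)$. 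Contracting each maximal hollow chain to a single ``shortcut edge'' then yields a tree with at most $2|\CC|-1$ edges (at most $2(|\CC|-1)$ in the symmetric case, where the root has at least two children), with ordinary edges contributing at most $2$ and shortcut edges at most $5$. Your proposal never confronts hollow chains, so the count does not go through.

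The charging scheme you sketch also cannot deliver the per-node bound you want. Since $\LL$ is a witness family, each $S\in\LL$ satisfies $\de_I(S)=\{e_S\}$ for its own witness edge; hence the ``minimal ancestor of $C$ separating the endpoints of $e$'' can only be charged by $e=e_S$ and so receives at most $2$ charges, while some pairs $(C,e)$ may have no separating ancestor in $\LL$ at all and go uncharged. The constant $5$ in the correct proof is not a per-node charge but a per-hollow-chain contribution, obtained from a case analysis (Lemmas~\ref{l:contra}--\ref{l:a}) of which of the endpoints $a_i,b_i$ of the chain edges can belong to $U$.
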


In the rest of this section we prove Lemma~\ref{l:2C}.
A set family $\LL$ is a {\bf laminar} if any two sets in $\LL$ are disjoint or one of them contains the other.
Let $I$ be an inclusion minimal edge cover of a set family $\FF$. 
We say that a set $S_e \in \FF$ is a {\bf witness set} for an edge $e \in I$ if 
$e$ is the unique edge in $I$ that covers $S_e$, namely, if $\delta_I(S_e)=\{e\}$.
We say that $\LL \subs \FF$ is a {\bf witness family} for $I$ if 
$|\LL|=|I|$ and for every $e \in I$ there is a witness set $S_e \in \LL$.
By the minimality of $I$, there exists a witness family $\LL \subs \FF$.
The following lemma was proved in \cite{BCGI}; 
we provide a proof for completeness of exposition.

\begin{lemma}[\cite{BCGI}] \label{l:witness}
Let $I$ be an inclusion minimal cover of a pliable set family $\FF$. 
Then there exists a witness family $\LL \subs \FF$ for $I$ that is laminar.
\end{lemma}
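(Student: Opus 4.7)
The plan is a standard uncrossing argument. A witness family exists: by the inclusion-minimality of $I$, removing any $e \in I$ leaves some $S_e \in \FF$ uncovered, and then $\delta_I(S_e)=\{e\}$. Picking one such $S_e$ for every edge gives an initial $\LL \subs \FF$. Among all witness families for $I$, I would choose $\LL$ minimizing a suitable potential---for instance the number of crossing pairs in $\LL$, with $\sum_{S\in\LL}|S|^2$ as a tie-breaker---and then show $\LL$ is laminar by deriving a contradiction if two sets cross.

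The heart of the argument is a case analysis on a crossing pair $S_e, S_f \in \LL$. Write $1 = S_e \cap S_f$, $2 = S_e \cup S_f$, $3 = S_e \sem S_f$, $4 = S_f \sem S_e$. Since $e$ covers $S_e$ but not $S_f$, its two endpoints lie either both in $S_f$ (one in $1$, one in $4$) or both outside $S_f$ (one in $3$, one outside $2$), giving cases E1: $e$ covers $\{1,4\}$ or E2: $e$ covers $\{2,3\}$; symmetrically for $f$ we have F1: $f$ covers $\{1,3\}$ or F2: $f$ covers $\{2,4\}$. The crucial observation is that for any $A \in \{1,2,3,4\}$, an edge of $I$ covering $A$ must cover $S_e$ or $S_f$, hence $\delta_I(A) \subs \delta_I(S_e) \cup \delta_I(S_f) = \{e,f\}$. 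In each of the four combined configurations, exactly one of the four sets is covered by neither $e$ nor $f$, and hence cannot belong to $\FF$ (else $I$ would leave it uncovered). Pliability then guarantees that two of the remaining three sets are in $\FF$. When these two form one of the ``aligned'' pairs $\{1,2\}$ or $\{3,4\}$, one of them witnesses $e$ and the other witnesses $f$; replacing $\{S_e, S_f\}$ by this pair yields a new witness family in which the $(S_e,S_f)$-crossing is gone, while by the standard uncrossing lemma no other $C \in \LL$ contributes additional crossings. The number of crossing pairs strictly decreases, contradicting the choice of $\LL$.

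The main obstacle is the ``mixed'' pliability subcases---$\{1,3\}$, $\{1,4\}$, $\{2,3\}$, or $\{2,4\}$---in which only one of the two pliability sets is a valid witness for the corresponding edge, so only a partial swap (replacing one of $S_e, S_f$ and keeping the other) is available. A partial swap does remove the $(S_e,S_f)$-crossing but may introduce new crossings with third sets $C \in \LL$, so the crossing-count potential need not strictly decrease. Handling these subcases cleanly either requires a refined potential---this is where the tie-breaker $\sum_{S\in\LL}|S|^2$ (or a similar set-size measure) must be shown to drop---or a more delicate argument propagating the swap through a chain of interfering witnesses until laminarity is restored. This is the subtle technical heart of the argument in \cite{BCGI}, and it is the step at which my proof would require the most care.
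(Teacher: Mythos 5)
Your setup and case analysis match the paper's proof: existence of a witness family from the minimality of $I$; the observation that $\delta_I(X)\subs\{e,f\}$ for each of the four sets $S_e\cap S_f$, $S_e\cup S_f$, $S_e\sem S_f$, $S_f\sem S_e$; the four configurations of $e,f$; the identification in each configuration of the one set covered by neither edge (hence not in $\FF$); and the conclusion via pliability that at least one of the two remaining ``useful'' corners is in $\FF$ and is a witness for $e$ or for $f$. But the step you explicitly leave open is the actual crux, and your proposal does not close it: you correctly observe that in general only a \emph{partial} swap is available (replace one of $S_e,S_f$, keep the other), you correctly worry that such a swap could create new crossings with third sets so that the crossing-pair count need not drop, and you then defer to an unspecified tie-breaker $\sum_{S\in\LL}|S|^2$ or an unspecified chain-propagation argument. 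As written this is a genuine gap, not a routine verification: neither proposed fix is carried out, and it is not clear that either would work as stated.

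The paper closes the gap with a single classical ingredient rather than a refined potential. It minimizes the total overlapping number $\be(\LL)=\sum_{A\in\LL}\be(A,\LL)$, where $\be(A,\LL)$ counts the members of $\LL$ that $A$ overlaps, and it \emph{always} performs a partial swap $\LL'=(\LL\sem\{A\})\cup\{A'\}$ with $A'\in\{A\cap B,\,A\cup B,\,A\sem B,\,B\sem A\}$. The missing ingredient in your write-up is the uncrossing lemma invoked there (cited as \cite[Lemma~23.15]{V}): when $A$ and $B$ overlap, any $S\in\LL$ overlapped by such an $A'$ is already overlapped by $A$. Granting this, the partial swap creates no new overlapping pairs, $\be(A',\LL')<\be(A,\LL)$ strictly because $A'$ no longer overlaps $B$, and $\be(S,\cdot)$ does not increase for any other $S$; hence $\be(\LL')<\be(\LL)$, contradicting minimality --- no tie-breaker and no chain argument are needed. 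To complete your proof you would need to state and verify exactly this claim; note that the partial swap requires the one-sided form (``overlapped by $A'$ implies overlapped by $A$''), which is stronger than the commonly quoted ``implies overlapped by $A$ or $B$,'' and making that step airtight (or redoing the pair-count accounting so that overlaps absorbed by $B$ are also charged correctly) is precisely the work your proposal postpones.
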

\begin{proof}
Let $A,B \in \FF$ be witness sets of edges $e,f \in I$, respectively. 
Note that no edge in $I \sem \{e,f\}$ covers a set from $A \cap B,A \cup B,A \sem B,B \sem A$,
as such an edge covers one of $A,B$, contradicting that $A,B$ are witness sets.   
Thus all possible locations of such $e,f$ are as depicted in Fig.~\ref{f:uncross}.
We claim that one of the sets $A \cap B, A \cup B, A \sem B, B \sem A$ is a witness set for one of $e,f$. 
This follows from the following observations.
\begin{itemize}
\item
$A \sem B \notin \FF$ in (a) and $B \sem A \notin \FF$ in (b), since these sets are not covered by $e,f$; 
in both cases, at least one of $A \cap B,A \cup B$ is in $\FF$, and it is a witness set for one of $e,f$.
\item
$A \cup B \notin \FF$ in (c), and $A \cap B \notin \FF$ in (d) since these sets are not covered by $e,f$; 
in both cases, at least one of $A \sem B,B \sem A$ is in  $\FF$, and it is a witness set for one of $e,f$.
\end{itemize}

This implies that if $\LL$ is a witness family for $I$, then for any $A,B \in \LL$ 
there are $A' \in \{A \cap B, A \cup B,A \sem B,B \sem A\}$ and $B' \in \{A,B\}$ such that
$\LL'=(\LL \sem \{A,B\})\cup \{A',B'\}$ is also a witness family for $I$.
W.l.o.g. we may assume that $B'=B$, so $\LL'$ is obtained from $\LL$ by replacing just one set $A$ by some 
$A' \in \{A \cap B, A \cup B,A \sem B,B \sem A\}$.

Let us say that two sets $A,B$ {\bf overlap} if 
each of the sets $A \cap B,A \sem B,B \sem A$ is non-empty. 
For $A \subs V$ let the {\bf overlapping number} of $A$ w.r.t. $\LL$ \ be 
$\be(A,\LL)= |\{B \in \LL : A,B \mbox{ overlap}\}|$.
Let $\be(\LL)=\sum_{A \in \LL} \be(A,\LL)$.
By the minimality of $I$ there exists a witness family for $I$. 
Let $\LL$ be a witness family for $I$ with $\be(\LL)$ minimal. 
We claim that $\LL$ is laminar.
Suppose to the contrary that there are sets $A,B \in \LL$ that overlap. 
Let $A',B',\LL'$ be as above, and w.l.o.g. assume that $B'=B$, so $\LL'=(\LL \sem \{A\}) \cup \{A'\}$
where  $A' \in \{A \cap B, A \cup B,A \sem B,B \sem A\}$.
By \cite[Lemma 23.15]{V}, if $A$ overlaps some $B \in \LL$, 
then for any $A' \in \{A \cap B, A \sem B,B \sem A, A \cup B\}$,
any $S \in \LL$ overlapped by $A'$ is also overlapped by $A$.
Consequently, $\be(A',\LL')<\be(A,\LL)$ (since $B$ is not overlapped by $A'$),
while $\be(S,\LL') \leq \be(S,\LL)$ for any other set in $\LL'$.
This implies $\be(\LL')<\be(\LL)$, contradicting the choice of $\LL$.
\end{proof}

\begin{figure} \centering \includegraphics{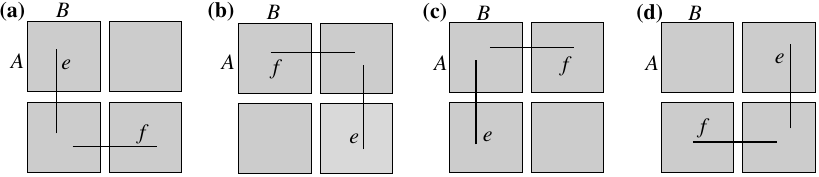}
\caption{Illustration to the proof of Lemma~\ref{l:witness}.}
\label{f:uncross} \end{figure}

Augment $\LL$ by the set $V$. 
Then $\LL$ can be represented by a rooted tree $\TT$ with node set $\LL$ and root $V$,
where the parent of $S$ in $\TT$ is the smallest set in $\LL$ that properly contains $S$.
The (unique) edge in $I$ that covers $S$ corresponds to the edge in $\TT$ from $S$ to its parent; 
a node $S$ of $\TT$ corresponds to the set of nodes in the set $S$ that do not belong to any child of $S$. 

\begin{definition} \label{d:chain}
A set $S \in \LL$ {\bf owns} a core $C \in \CC$ if $S$ is the inclusion-minimal set in $\LL$ that contains $C$.
We say that $S$ is {\bf hollow} if it owns no core.
A sequence $(S_0,S_1, \ldots, S_\ell)$ of sets in $\LL$ is called a {\bf hollow chain} (of length $\ell$)
if $S_1, \ldots, S_\ell$ are hollow, $S_\ell \neq V$, and $S_{i-1}$ is the unique child of $S_i$ in $\LL$, $i=1, \ldots,\ell$. 
For each $S_i$ let $a_ib_i$ be the unique edge in $I$ that covers $S_i$, where $a_i \in S_i$
(possibly $a_i=b_{i-1}$). 
\end{definition}

We will use for nodes of $\TT$ the the same terminology as for sets in $\LL$;
specifically, a node of $\TT$ is hollow if it represents a hollow set, and a hollow chain in $\TT$ is a path 
such that all internal nodes in the path correspond to hollow sets 
(note that each of these nodes has degree exactly $2$ in $\TT$).
We will need the following well known statement. 

\begin{lemma} \label{l:no-chain}
Let $T=(V_T,E_T)$ be a tree rooted at $r$ and let $R,W$ be a partition of  $V_T$ such that 
every $w \in W \sem \{r\}$ has at least $2$ children. 
Then $|E_T| \leq 2|R|-1$. Furthermore, if $r$ has at least $2$ children then $|E_T| \leq 2(|R|-1)$. 
\end{lemma}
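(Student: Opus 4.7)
The plan is to prove both bounds simultaneously by induction on $|V_T|$. Since $|E_T|=|V_T|-1=|R|+|W|-1$, the first inequality is equivalent to $|W|\leq|R|$ and the furthermore inequality to $|W|\leq|R|-1$. The inductive reduction collapses the subtree rooted at a non-root white vertex $w$ of maximum depth. Every proper descendant of $w$ must be red: such a descendant is not the root, so if it were white it would itself have $\geq 2$ children, contradicting the depth-maximal choice of $w$.

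For the inductive step, let $k$ denote the number of proper descendants of $w$; since $w$ itself has $\geq 2$ children, $k\geq 2$. Form $T'$ by deleting those $k$ descendants and recoloring $w$ red, so that $|R'|=|R|-k+1$ and $|W'|=|W|-1$. The tree $T'$ still satisfies the hypothesis (no non-root white vertex was altered), and, crucially for the furthermore part, $r$ keeps its original set of children. The inductive hypothesis $|E_{T'}|\leq 2|R'|-1$ then gives
$$|E_T|=|E_{T'}|+k\leq 2(|R|-k+1)-1+k=2|R|-k+1\leq 2|R|-1,$$
where the final inequality uses $k\geq 2$. The furthermore bound is obtained identically, replacing $2|R'|-1$ by $2(|R'|-1)$ to conclude $|E_T|\leq 2|R|-k\leq 2(|R|-1)$.

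The induction bottoms out once no non-root white vertex remains. If $r\in R$, then $|W|=0$ and both bounds reduce to $|R|\geq 1$, which holds since $r\in R$. If $r\in W$, then $|W|=1$; the first bound reads $|R|\geq 1$, which we may safely assume since otherwise the claimed inequality $|E_T|\leq -1$ is vacuous. For the furthermore bound in this case, the hypothesis that $r$ has $\geq 2$ children forces $T$ to have at least two leaves, each of which must be red (a white leaf would be a non-root vertex with fewer than $2$ children), giving $|R|\geq 2$ and so $|W|=1\leq|R|-1$. The closest thing to an obstacle is verifying that the furthermore hypothesis on $r$ persists through the reduction; this is automatic because contracting the subtree of a non-root $w$ never touches the root's set of children.
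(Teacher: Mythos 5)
Your proof is correct. The paper disposes of the lemma with a direct count: no leaf lies in $W$ (every non-root member of $W$ has at least two children), and since the number of nodes with at least two children in a tree is less than the number of leaves, one gets $|W|\leq|R|$ (and $|W|\leq|R|-1$ when $r$ also has two children), whence $|E_T|=|W|+|R|-1$ yields both bounds. You arrive at the same two inequalities $|W|\leq|R|$ and $|W|\leq|R|-1$ by an induction that collapses the subtree of a deepest non-root white vertex; this is more verbose but also more self-contained, since it does not appeal to the standard leaves-versus-branching-nodes fact, and it makes explicit the two points the paper leaves implicit (that the hypothesis on $W$ is preserved under the reduction, and that the root's set of children is untouched). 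All steps check out: the proper descendants of a depth-maximal non-root white vertex are indeed all red, so $k\geq 2$ and the bookkeeping $|R'|=|R|-k+1$, $|E_T|=|E_{T'}|+k$ is right.

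One wording issue in your base case: when $r\in W$ and $R=\emptyset$ (the one-vertex tree), the conclusion $|E_T|\leq -1$ is not \emph{vacuous} but false, so the lemma as literally stated fails in that degenerate case. You should say instead that the statement implicitly assumes $r$ is not a leaf (which is what the paper's parenthetical remark does), or that $R\neq\emptyset$ in the intended application. Since this defect is inherited from the statement itself rather than introduced by your argument, it does not undermine your proof.
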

\begin{proof}
No leaf of $T$ is in $W$ (note that $r$ is not a leaf), thus $|W| \leq |R|$; 
the worse case is when $R$ is the set of leaves of $T$, $r$ has $1$ child, and every node in $W \sem \{r\}$ has exactly $2$ children.
Consequently, $|E_T| \leq |W|+|R|-1 \leq 2|R|-1$. If $R$ has at least $2$ children 
then $|W| \leq |R|-1$ and we get  that $|E_T| \leq |W|+|R|-1 \leq 2|R|-2$.
\end{proof}

If $\TT$ has no hollow chain, then Lemma~\ref{l:no-chain} implies that 
$|\LL| \leq 2|\CC|-1$ and 
$|\LL| \leq 2(|\CC|-1)$ if $\FF$ is symmetric.
To see this, just let $W$ to be the set of hollow nodes in $\TT$ in Lemma~\ref{l:no-chain}, 
and note that if $\FF$ is symmetric then the root of $\TT$ has at least $2$ children.
Since the cores in $\CC$ are pairwise disjoint (by Lemmas \ref{l:res} and \ref{l:CS}), 
every edge contributes at most $2$ to $\sum_{C \in \CC} d_I(C)$, hence (since $|I|=|\LL|$) 
$\sum_{C \in \CC} d_I(C) \leq 2(2|C|-1)$, 
and $\sum_{C \in \CC} d_I(C) \leq 4(|C|-1)$ if $\FF$ is symmetric. 

Suppose now that every hollow chain has length $\ell$.  
Then the contribution of the edges $a_0b_0,\ldots, a_\ell b_\ell$ of each chain is at most $2(\ell+1)$.
If we ``shortcut'' every maximal hollow chain in $\TT$, 
we obtain a tree with at most $2|\CC|-1$ edges.
However, every such edge might be a shortcut of a hollow chain,
and thus may contribute $2(\ell+1)$ to $\sum_{C \in \CC} d_I(C)$.
As the number of edges after the shortcuts is at most $2|\CC|-1$, we get $\sum_{C \in \CC} d_I(C) \leq 2(\ell+1)(2|\CC|-1)$.
Bansal et. al. \cite{BCGI} showed that the maximum possible length of a hollow chain is $3$,
which gives the bound $\sum_{C \in \CC} d_I(C) \leq 8 \cdot (2|\CC|-1) < 16 \cdot |\CC|$.

Let $U = \cup_{C \in \CC} C$. To improve this bound of \cite{BCGI},
we will show that for any hollow chain with $\ell=2,3$,
among the relevant nodes $a_0,b_0,\ldots, a_\ell,b_\ell$ at most $3$ belong to $U$.
This reduces the bound on the contribution of every hollow chain from $8$ to $5$,
and gives the bounds 
$\sum_{C \in \CC} d_I(C) \leq 5 \cdot (2|\CC|-1)$ and 
$\sum_{C \in \CC} d_I(C) \leq 5 \cdot 2(|\CC|-1)$ if $\FF$ is symmetric.
Specifically, in the next section we will prove the following. 

\begin{figure} \centering \includegraphics{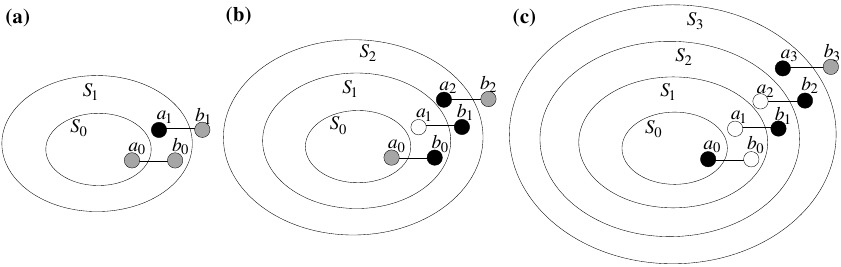}
\caption{Illustration to Lemma~\ref{l:main}.
Black nodes are in $U$, white nodes are not in $U$, while gray nodes may or may not be in $U$.}
\label{f:2} \end{figure}

\begin{lemma} \label{l:main}
Let $(S_0,S_1, \ldots, S_\ell)$ be a hollow chain and let $a_ib_i$ be as in Definition~\ref{d:chain}.  
Then $\ell \leq 3$ and the following holds, see Fig.~\ref{f:2}:
\begin{enumerate}[(a)]
\item
If $a_1 \in U$ then $\ell=1$.
\item
If $b_0 \in U$ then $\ell \leq 2$; if $\ell=2$ then $a_1 \notin U$ and $b_0,b_1,a_2$ belong to the same core.
\item
If $b_1 \in U$ then $\ell \leq 3$; if $\ell=3$ then $a_1,b_0,a_2 \notin U$ and $b_1,b_2,a_3$ belong to the same core.  
\end{enumerate}
\end{lemma}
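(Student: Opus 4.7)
The plan is to use Property~$(\ga)$ iteratively along the hollow chain to show that once a core $C$ intersects a ``middle level'' $S_j \sem S_{j-1}$ for some $1 \le j \le \ell-1$, it must absorb the next level $S_{j+1} \sem S_j$ entirely. Combined with the hypothesis of Lemma~\ref{l:2C} that every edge of $I$ covers some core in $\CC$, this propagation forces both the bound $\ell \le 3$ and the structural coincidences in (a), (b), (c). The only geometric fact I would set up first is that $a_i \in S_i \sem S_{i-1}$ and $b_{i-1} \in S_i \sem S_{i-1}$ for every $1 \le i \le \ell$: since $a_{i-1} b_{i-1}$ is the unique edge of $I$ covering $S_{i-1}$ and $a_i b_i \neq a_{i-1} b_{i-1}$, the edge $a_i b_i$ cannot cover $S_{i-1}$, and as $b_i \notin S_i \supseteq S_{i-1}$ this forces $a_i \in S_i \sem S_{i-1}$; the fact for $b_{i-1}$ is symmetric.

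The central tool is the \emph{propagation claim}: if a core $C$ satisfies $C \cap (S_j \sem S_{j-1}) \neq \empt$ with $1 \le j \le \ell-1$, then $S_{j+1} \sem S_j \subs C$. I would prove this by applying Property~$(\ga)$ to $S_j \subset S_{j+1}$ and $C$. Two verifications are non-trivial. First, $V \sem (C \cup S_j) \neq \empt$: if instead $V = C \cup S_j$, then pliability applied to $S_j, C \in \FF$ requires at least two of $S_j \cap C, S_j \cup C, S_j \sem C, C \sem S_j$ to lie in $\FF$; but $S_j \cup C = V \notin \FF$, and both $S_j \cap C$ and $C \sem S_j$ are proper subsets of the minimal core $C$ (using $C \not\subs S_j$ and $C \cap S_j \neq \empt$) and hence cannot be in $\FF$, a contradiction. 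The remaining crossing requirements follow from hollowness of $S_j$ (so $C \subs S_j$ would force $C \subs S_{j-1}$, contradicting $C \cap (S_j \sem S_{j-1}) \neq \empt$) and minimality of $C$ (so $S_j \subsetneq C$ is impossible). Second, the $(\ga)$-conclusion rules out the alternative $S_{j+1} \sem (S_j \cup C) \in \FF$: such a set lies in $S_{j+1} \sem S_j$, yet any core inside $S_{j+1}$ must lie in its unique child $S_j$ by hollowness, and a nonempty set in $\FF$ always contains a core. Hence $S_{j+1} \sem (S_j \cup C) = \empt$, i.e.\ $S_{j+1} \sem S_j \subs C$.

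With propagation in hand, each of (a), (b), (c) follows from the observation that if \emph{both} endpoints of an edge $a_k b_k$ lie in a single core $C$, then (as cores are pairwise disjoint by Lemma~\ref{l:CS}) the edge covers no core at all, contradicting the Lemma~\ref{l:2C} hypothesis. For (a), assume $a_1 \in C$ and $\ell \ge 2$; one propagation at $j=1$ puts $b_1 \in S_2 \sem S_1 \subs C$, producing the contradiction, so $\ell = 1$. For (b), assume $b_0 \in C$; if $\ell \ge 3$, two propagations at $j=1,2$ put $a_2, b_2 \in C$ and edge $a_2 b_2$ is the contradiction, so $\ell \le 2$; when $\ell = 2$, the single propagation yields $b_0, b_1, a_2$ in the common core $C$, and $a_1 \notin U$ by (a). Part (c) is strictly analogous starting from $b_1 \in C$: two propagations at $j=2,3$ rule out $\ell \ge 4$, and when $\ell = 3$ the single propagation puts $b_1, b_2, a_3 \in C$; the same one-step argument rules out $a_2 \in U$, while $a_1 \notin U$ and $b_0 \notin U$ by (a) and (b). Finally, the same propagation yields $\ell \le 3$: every edge $a_i b_i$ needs an endpoint in $U$, but $a_i \in U$ with $1 \le i \le \ell-1$ and $b_i \in U$ with $0 \le i \le \ell-3$ are both ruled out (by the same schemes as (a) and (b)), so $\ell \ge 4$ would yield an edge $a_i b_i$ with $1 \le i \le \ell-3$ and both endpoints outside $U$, contradicting the hypothesis. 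The main obstacle I expect is the clean verification of $V \sem (C \cup S_j) \neq \empt$, where pliability and the minimality of $C$ must be combined just so; everything else is careful bookkeeping of propagation steps.
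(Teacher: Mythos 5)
Your proposal is correct and follows essentially the same route as the paper: your ``propagation claim'' is exactly the paper's Lemma~\ref{l:a}, obtained by applying Property~$(\ga)$ to $S_j\subset S_{j+1}$ and a core meeting the layer $S_j\sem S_{j-1}$ (the crossing verification is the paper's Lemma~\ref{l:long-core}, and the exclusion of $S_{j+1}\sem(S_j\cup C)\in\FF$ is its Lemma~\ref{l:contra}), and the bookkeeping deriving (a)--(c) and $\ell\le 3$ matches the paper's. The only cosmetic difference is that you phrase the conclusion as $S_{j+1}\sem S_j\subs C$ and derive contradictions from an edge having both endpoints in one core, whereas the paper argues the contrapositive via the non-emptiness of $S_{j+1}\sem(S_j\cup C)$; these are the same dichotomy.
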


Let us show that Lemma~\ref{l:main} implies Lemma~\ref{l:2C}. 
First, note that at least one of $a_1,b_1$ is in $U$, since every edge covers some core $C \in \CC$, 
by the assumption in Lemma~\ref{l:2C}. Thus $\ell \leq 3$. 
In the tree representation $\TT$ of $\LL$ let us ``shortcut'' all maximal hollow chains, see Fig.~\ref{f:3}.
This means that we replace the chain -- the edges of the chain 
(that correspond to edges $a_0b_0,\ldots,a_\ell b_\ell$) and 
the nodes that correspond to sets $S_1,\ldots , S_\ell$, 
by a new ``shortcut edge'' between $S_0$ and $S_\ell$
(for illustration, in Fig.~\ref{f:3} we assume that this new edge is $a_0b_\ell$).
This operation also can be viewed as identifying $S_0$ with $S_\ell$, by removing the nodes in $S_\ell \sem S_0$ 
(and adding to $I$ the edge $a_0b_\ell$).
Every such shortcut edge contributes at most $5$ (at most $4$ if $\ell=1$) 
to $\sum_{C \in \CC} d_I(C)$, by Lemma~\ref{l:main}.
Furthermore, the obtained tree has no hollow chains.  
By Lemma~\ref{l:no-chain}, the total number of edges in the obtained tree is at most $2|\CC|-1$, and at most $2(|\CC|-1)$ if $\FF$ is symmetric.
Each edge of $\TT$ contributes to $\sum_{C \in \CC} d_I(C)$ 
at most $2$ if it is an ordinary edge and at most $5$ if it is a shortcut edge. 
Consequently,  $\sum_{C \in \CC} d_I(C) \leq 5 \cdot (2|\CC|-1)$, and 
$\sum_{C \in \CC} d_I(C) \leq 10 \cdot (|\CC|-1)$ if $\FF$ is symmetric. 

This concludes the proof of Theorem~\ref{t:main}, provided that we will prove Lemma~\ref{l:main}, 
which we will do in the next section.

\begin{figure} \centering \includegraphics{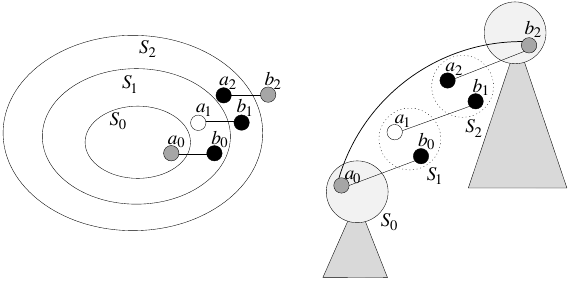}
\caption{Illustration of a shortcut of a hollow chain with $\ell=2$.
Black nodes are in $U$, white nodes are not in $U$, while gray nodes may or may not be in $U$.}
\label{f:3} \end{figure}

\section{Proof of Lemma~\ref{l:main}} \label{s:main}
 
To prove Lemma~\ref{l:main} we will apply Property $(\ga)$ on the family $\FF$ in Lemma~\ref{l:2C}
(that is in fact a residual family of the original family that we want to cover), so we restate here 
Property $(\ga)$ only for the current family $\FF$. 


\medskip

\noindent
{\bf Property $(\ga)$:} 
For any sets $S_1 \subset S_2$ in $\FF$,  
if an inclusion minimal set $C$ of $\FF$ crosses each of $S_1,S_2$,
then the set $S_2 \sem (S_1 \cup C)$ is either empty or belongs to $\FF$.

\medskip

In what follows, note that every $C \in \CC$ is owned by some some set in $\LL$, since $V \in \LL$. 
The next two Lemmas is a preparation for using Property $(\ga)$.

\begin{lemma} \label{l:contra}
Let $A$ be the child of a hollow set $S$. Then $S \sem A$ contains no set in $\FF$. 
\end{lemma}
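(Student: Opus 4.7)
The natural approach is a short proof by contradiction: assume some $B \in \FF$ satisfies $B \subs S \sem A$, and extract from $B$ an $\FF$-core whose location forces it to lie inside $A$, contradicting $B \cap A = \empt$. This is clean because ``hollow'' gives exactly the statement that $S$ owns no core, and ``$A$ is \emph{the} child of $S$'' means $S$ has $A$ as its unique child in $\TT$.

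First, I would observe that since $\FF$ is a finite family and $B \in \FF$, any inclusion-minimal element $C$ of $\{F \in \FF : F \subs B\}$ is automatically an $\FF$-core, so there is some $C \in \CC$ with $C \subs B$. From $B \subs S \sem A$ I get $C \subs S$ and $C \cap A = \empt$ (and $C \neq \empt$ since $\empt \notin \FF$).

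Next, consider the \emph{owner} $S'$ of $C$ in $\LL$, namely the inclusion-minimal member of $\LL$ that contains $C$. Such $S'$ exists because $S \in \LL$ contains $C$. Hollowness of $S$ rules out $S' = S$, so $S' \subsetneq S$. Walking up the parent chain in $\TT$ from $S'$, the first ancestor that equals $S$ is reached through a child of $S$ in $\TT$; since $A$ is the unique child of $S$, that ancestor must be $A$, and hence $S' \subs A$. Consequently $C \subs S' \subs A$, contradicting $C \cap A = \empt$.

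I do not expect any real obstacle. The lemma is essentially a bookkeeping statement that packages what hollowness buys us for the subsequent application of Property~$(\ga)$ in the proof of Lemma~\ref{l:main}; the only step that warrants care is noting that some \emph{core} (not just some set in $\FF$) lies inside $B$, which follows from the finiteness of $\FF$.
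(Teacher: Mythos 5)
Your proof is correct and follows essentially the same route as the paper's: extract a core $C\subseteq B$, pass to its owner in $\LL$, use hollowness to push the owner strictly below $S$, and derive a contradiction from $A$ being the unique child of $S$ (the paper phrases the final contradiction as ``$S$ would have two children'' rather than ``$C\subseteq A$'', but this is the same argument). Your explicit remark that one must pass from an arbitrary $B\in\FF$ to a core via finiteness is a point the paper states without comment, and is handled correctly.
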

\begin{proof}
Suppose to the contrary that $S \sem A$ contains a set $B \in \FF$.
Then $B$ contains some core $C \in \CC$. Let $S_C \in \LL$ be the set that owns $C$. 
Then we must have $S_C \subs S$ and $S_C \cap A=\empt$. 
Consequently, $S$ has at least $2$ children in $\LL$, contradicting that $S$ is hollow.
\end{proof}

\begin{lemma} \label{l:long-core}
Let $C \in \CC$ and let $1 \leq i \leq \ell$. 
If $(C \cap S_i) \sem S_0 \neq \empt$ then $C$ and $S_i$ cross. 
\end{lemma}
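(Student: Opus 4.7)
The plan is to verify the four crossing conditions one by one: $C \cap S_i \neq \empt$, $C \sem S_i \neq \empt$, $S_i \sem C \neq \empt$, and $V \sem (C \cup S_i) \neq \empt$. The first is immediate from the hypothesis $(C \cap S_i) \sem S_0 \neq \empt$. Note also that $S_i \in \FF$: the chain $S_0 \subset \cdots \subset S_\ell$ is strictly increasing and $S_\ell \neq V$, so $S_i$ is a non-$V$ member of $\LL$ and hence a member of $\FF$.

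The heart of the argument, and the only step that really uses the hollow-chain structure, is to rule out $C \subs S_i$. I plan to descend along the chain: suppose for contradiction that $C \subs S_i$. Since $S_i$ is hollow it cannot own $C$, so the inclusion-minimal member of $\LL$ containing $C$ is a proper descendant of $S_i$ in $\TT$. Because $S_{i-1}$ is the unique child of $S_i$ in $\TT$, every proper descendant of $S_i$ lies inside $S_{i-1}$, forcing $C \subs S_{i-1}$. If $i-1 \geq 1$ then $S_{i-1}$ is still hollow and the same argument yields $C \subs S_{i-2}$. Iterating down the chain gives $C \subs S_0$, whence $C \cap S_i \subs S_0$, contradicting the hypothesis. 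In particular this shows $C \sem S_i \neq \empt$.

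The remaining two conditions will come from applying Lemma~\ref{l:CS} to the core $C$ and the set $A = S_i \in \FF$. Of that lemma's three alternatives, $C \cap S_i = \empt$ is excluded by the hypothesis and $C \subs S_i$ has just been ruled out, so $S_i \sem C \in \FF$ and $S_i \cup C \in \FF$ must both hold. Combined with the standing conventions $\empt \notin \FF$ and $V \notin \FF$, these give $S_i \sem C \neq \empt$ and $V \sem (S_i \cup C) \neq \empt$, finishing the verification that $C$ and $S_i$ cross. I expect the descending argument in the second paragraph to be the only genuinely nontrivial step; once it is in hand, everything else is a direct application of Lemma~\ref{l:CS} plus the bookkeeping that $\empt$ and $V$ are not in $\FF$.
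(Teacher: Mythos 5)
Your proof is correct and follows essentially the same route as the paper: the decisive step in both is ruling out $C \subseteq S_i$ using the hollowness of $S_1,\ldots,S_i$ and the unique-child chain structure (you phrase it as an iterative descent forcing $C \subseteq S_0$, while the paper directly locates the owner $S_C$ among $S_1,\ldots,S_i$), and both then invoke Lemma~\ref{l:CS} together with $V \notin \FF$. The only cosmetic difference is that you obtain $S_i \sem C \neq \empt$ from Lemma~\ref{l:CS} and $\empt \notin \FF$, whereas the paper exhibits a point of $S_0 \sem C$.
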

\begin{proof}
We need to show that  $C \cap S_i, S_i \sem C, C \sem S_i$ are nonempty and that $S_i \cup C \neq V$.  
\begin{itemize}
\item
$C \cap S_i \neq \empt$ by the assumption. 
\item
$S_i \sem C \neq \empt$ since $C$ does not contain $S_0$, hence there is $v \in S_0 \sem C \subset S_i \sem C$. 
\item
$C \cup S_i \neq V$ since $C \cup S_i \in \FF$ by Lemma~\ref{l:CS} and since $V \notin \FF$.
\end{itemize}
It remains to show that  $C \sem S_i \neq \empt$.
Suppose to the contrary that $C \subs S_i$. 
Let $S_C$ be the set that owns $C$. 
Since $C \subs S_i$, $S_C$ is a descendant of $S_i$.
Since $C \sem S_0 \neq \empt$, $S_C$ is not a descendant of $S_0$. 
Thus $S_C$ must be one of the sets $S_1, \ldots, S_i$, which is impossible, since all these sets are hollow.
 \end{proof}

Now we will use Property $(\ga)$.
Note that since the cores are pairwise disjoint (Lemma~\ref{l:CS}), 
then by the assumption that every edge covers some $C \in \CC$ (Lemma~\ref{l:2C}),
for any edge $uv \in I$ the following holds: 
$|\{u,v\} \cap U| \geq 1$ and 
$|\{u,v\} \cap C| \leq 1$ for any $C \in \CC$.

\begin{lemma} \label{l:a}
Let $(S_0,S_1, \ldots, S_\ell)$ be a hollow chain and let $a_ib_i$ be as in Definition~\ref{d:chain}.  
\begin{enumerate}[(i)]
\item
If $a_i \in U$ for some $i \geq 1$ then $\ell=i$.
\item
If $b_{i-1} \in U$ for some $i \geq 0$ then $\ell \leq i+1$ and if $\ell=i+1$ then 
$b_{i-1},b_i,a_{i+1}$ belong to the same core.
\end{enumerate}
\end{lemma}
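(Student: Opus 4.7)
The plan is to use Property $(\ga)$ to propagate the inclusion of consecutive layer-differences $S_{j+1}\sem S_j$ into a single core $C$, and then derive a contradiction from the fact (noted just before the statement of Lemma~\ref{l:a}) that no edge of $I$ can have both endpoints in the same core. I begin by recording the positions of $a_j,b_j$ in the chain. For $j\geq 1$, since $\LL$ is a witness family, the edge $a_jb_j$ is distinct from $a_{j-1}b_{j-1}$, so by uniqueness of the covering witness $a_jb_j$ does not cover $S_{j-1}$ and $a_{j-1}b_{j-1}$ does not cover $S_j$. Combined with $a_j\in S_j\not\ni b_j$ and $a_{j-1}\in S_{j-1}\not\ni b_{j-1}$, this forces both $a_j$ and $b_{j-1}$ to lie in $S_j\sem S_{j-1}$, and hence in $S_j\sem S_0$, since $S_0\subs S_{j-1}$ for $j\geq 1$.

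For part (i), suppose $a_i\in U$ with $i\geq 1$, let $C$ be the core containing $a_i$, and assume toward a contradiction that $\ell\geq i+1$. Because $a_i\in(C\cap S_j)\sem S_0$ for both $j=i$ and $j=i+1$, Lemma~\ref{l:long-core} yields that $C$ crosses each of $S_i$ and $S_{i+1}$. Applying Property $(\ga)$ to $S_i\subset S_{i+1}$ and $C$, the set $S_{i+1}\sem(S_i\cup C)$ is empty or belongs to $\FF$; as $S_{i+1}$ is hollow with unique child $S_i$, Lemma~\ref{l:contra} rules out the latter, so $S_{i+1}\sem S_i\subs C$. But $b_i\in S_{i+1}\sem S_i\subs C$ while $a_i\in C$, so both endpoints of $a_ib_i$ lie in the same core, a contradiction. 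Hence $\ell=i$.

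Part (ii) uses the same mechanism and then invokes part (i). Suppose $b_{i-1}\in U$ with $i\geq 1$, let $C$ be its core, and assume $\ell\geq i+1$ (otherwise there is nothing more to prove). Since $b_{i-1}\in(C\cap S_j)\sem S_0$ for $j=i,i+1$, Lemma~\ref{l:long-core} together with Property $(\ga)$ and Lemma~\ref{l:contra} again yield $S_{i+1}\sem S_i\subs C$. In particular $b_i,a_{i+1}\in C$, and together with $b_{i-1}\in C$ this establishes the ``same core'' conclusion of (ii). Finally, if $\ell\geq i+2$, then $a_{i+1}\in U$, and part (i) applied at index $i+1$ forces $\ell=i+1$, contradicting $\ell\geq i+2$. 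Hence $\ell\leq i+1$.

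The main obstacle is engineering the correct application of Property $(\ga)$: one must verify that the relevant core truly crosses the two hollow sets -- this is where Lemma~\ref{l:long-core} and the placement of $a_j,b_{j-1}$ in $S_j\sem S_0$ are needed -- and then extract the inclusion $S_{i+1}\sem S_i\subs C$ by combining the $(\ga)$-dichotomy with hollowness via Lemma~\ref{l:contra}. Once this inclusion is in hand, the ``two endpoints in the same core'' contradiction closes both parts cleanly, with (ii) reducing to (i) by shifting the index.
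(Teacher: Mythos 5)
Your proof is correct and follows essentially the same route as the paper: in both parts you place the relevant endpoint in $(C\cap S_i)\sem S_0$ and $(C\cap S_{i+1})\sem S_0$, invoke Lemma~\ref{l:long-core} to get the crossing hypothesis, and combine Property~$(\gamma)$ with Lemma~\ref{l:contra} to force $S_{i+1}\sem S_i\subseteq C$, closing with the "no edge has both endpoints in one core" observation and the reduction of (ii) to (i) via $a_{i+1}\in U$. The only cosmetic difference is that the paper derives the contradiction directly from the nonemptiness of $S_{i+1}\sem(S_i\cup C)$ (using $b_i\notin C$ up front) rather than first concluding the inclusion into $C$; the two presentations are logically equivalent.
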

\begin{proof}
For part (i), suppose to the contrary that  $S_{i+1}$ exists, see Fig.~\ref{f:4}(i).
Let $C \in \CC$ be such that $a_i \in C$. 
By Lemma~\ref{l:long-core}, $C$ crosses each of $S_i,S_{i+1}$.
Note that $b_i \notin C$, hence the set $S_{i+1} \sem (S_i \cup C)$ is non-empty, 
and thus by Property $(\gamma)$ is in $\FF$.  This contradicts Lemma~\ref{l:contra}. 

For part (ii), suppose that $C_{i+1}$ exists and let $C \in \CC$ be such that $b_{i-1} \in \CC$, see Fig.~\ref{f:4}(ii).
By Lemma~\ref{l:long-core}, $C$ crosses each of $S_i,S_{i+1}$.
If $b_i \notin C$ or if $a_{i+1} \notin C$ then the set $S_{i+1} \sem (S_i \cup C)$ is non-empty, 
and thus by Property $(\gamma)$ is in $\FF$, contradicting Lemma~\ref{l:contra}. 
Thus $b_i,a_{i+1} \in C$. Since $a_{i+1} \in U$, we get by part~(i) that $\ell=i+1$.
\end{proof}

\begin{figure} \centering \includegraphics{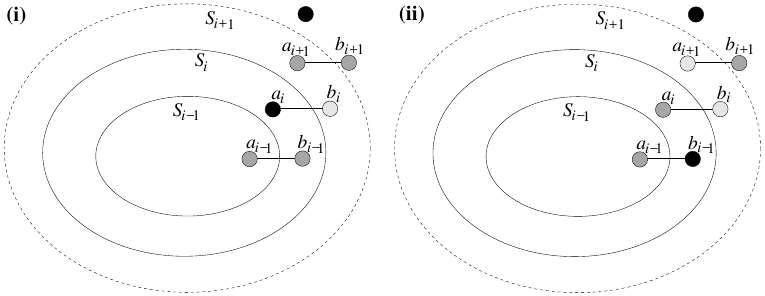}
\caption{Illustration to the proof of Lemma~\ref{l:a}.
(i) $a_i \in C$ and $b_i \notin C$. (ii) $b_{i-1} \in C$ and one of $b_i,a_{i+1}$ not in $C$.}
\label{f:4} \end{figure}

Lemma~\ref{l:main} easily follows from Lemma~\ref{l:a}.
\begin{enumerate}[(a)]
\item
If $a_1 \in U$ then by Lemma~\ref{l:a}(i) we have case~(a) of Lemma~\ref{l:main}, so assume that $a_1 \notin U$.
\item
If $b_0 \in U$ (and $a_1 \notin U$), then by Lemma~\ref{l:a}(ii) we have case~(b) of Lemma~\ref{l:main}.
\item
If $a_1,b_0 \notin U$ then $b_1 \in U$. 
Then either $\ell=2$, or $\ell=3$ and $b_2,b_2,a_3$ belong to the same core, so we have case~(c) of Lemma~\ref{l:main}.
\end{enumerate}

This concludes the proof of Lemma~\ref{l:main}, 
and thus also the proof of Lemma~\ref{l:2C} and Theorem~\ref{t:main} is complete.  



\end{document}